\newtheorem*{theorem}{Theorem}
\begin{document}


\title{Classification of complex systems by their sample-space scaling exponents}

\author{Jan Korbel}%

\address{Section for Science of Complex Systems, CeMSIIS, Medical University of Vienna, Spitalgasse 23, 1090 Vienna, Austria}
\address{Complexity Science Hub Vienna, Josefst\"{a}dter Strasse 39, 1080 Vienna, Austria}
\ead{jan.korbel@meduniwien.ac.at}

\author{Rudolf Hanel}
\address{Section for Science of Complex Systems, CeMSIIS, Medical University of Vienna, Spitalgasse 23, 1090 Vienna, Austria}
\address{Complexity Science Hub Vienna, Josefst\"{a}dter Strasse 39, 1080 Vienna, Austria}
\ead{rudolf.hanel@meduniwien.ac.at}

\author{Stefan Thurner}
\address{Section for Science of Complex Systems, CeMSIIS, Medical University of Vienna, Spitalgasse 23, 1090 Vienna, Austria}
\address{Complexity Science Hub Vienna, Josefst\"{a}dter Strasse 39, 1080 Vienna, Austria}
\address{Santa Fe Institute, 1399 Hyde Park Road, Santa Fe, NM 87501, USA}
\address{IIASA, Schlossplatz 1, 2361 Laxenburg, Austria}
\ead{stefan.thurner@meduniwien.ac.at}

\begin{abstract}

The nature of statistics, statistical mechanics and consequently the thermodynamics
of stochastic systems is largely determined by how the number of states $W(N)$ depends on the size $N$ of the system.
Here we propose a scaling expansion of the phasespace volume $W(N)$ of a stochastic system.
The corresponding expansion coefficients (exponents) define the universality class the system belongs to.
Systems within the same universality class share the same statistics and thermodynamics.
For sub-exponentially growing systems such expansions have been shown to exist.
By using the scaling expansion this classification can be extended to {\em all} stochastic systems, including
correlated, constraint and super-exponential systems.
The extensive entropy of these systems can be easily expressed in terms of these scaling exponents.
Systems with super-exponential phasespace growth contain important systems, such as magnetic coins
that combine combinatorial and structural statistics.
We discuss other applications in the statistics of networks, aging, and cascading random walks.

\end{abstract}

\pacs{
05.20.-y, 
02.50.-r, 
05.90.+m 
}

\submitto{\NJP}                              
\maketitle


\section{Introduction}
Classical statistical physics typically deals with large systems composed of weakly interacting components,
which can be decomposed into (practically) independent sub-systems.
The phasespace volume $W$ or the number of states of such systems grows exponentially with system size $N$.
For example, the number of configurations in a spin system of $N$ independent spins is $W(N) = 2^N$.
For more complicated systems, however, where particles interact strongly, which are path-dependent, or whose
configurations become constrained, exponential phasespace growth no-longer occurs, and things become more interesting.
For example, in black holes the accessible number of states does not scale with the volume but with surface, which leads to
non-standard entropies and thermodynamics \cite{Beckenstein1974,Hawking1974,Thirring1970}.
A version of entropy that depends on the surface and the volume was recently suggested in \cite{biro}.

Other examples include systems with interactions on networks, path-dependent processes, co-evolving systems, and many
driven non-equilibrium systems. These systems are often non-ergodic and are referred to as {\em complex systems}.
For these systems, in general, the classical statistical description based on Boltzmann-Gibbs statistical mechanics fails to make correct predictions
with respect of the thermodynamic, the information theoretic, or the maximum entropy related aspects \cite{thurner_corominas_hanel17}.
Often the underlying statistics is then dominated by fat-tailed distributions, and power-laws in particular.
There have been considerable efforts to understand the origin of power-law statistics in complex systems.
Some progress was made for systems with sub-exponentially growing phasespace.
It was shown that systems whose phasespace grow as power laws, $W(N) \sim N^b$, are tightly related to so-called Tsallis statistics \cite{Sato2005}.

The tremendous variety and richness of complex systems has led to the question whether it is possible to classify them in terms of their statistical behavior.
Given such a classification, is it possible to arrive at a generalized concept of the statistical physics of complex systems, or do
we have to establish the statistical physics framework for every particular system independently?
For sub-exponentially growing systems such a classification was attempted by characterizing
stochastic systems in terms of two scaling exponents of their extensive entropy \cite{hanel-thurner11a}.
The first scaling exponent is recovered from the relation $\frac{S(\lambda W)}{S(W)} \sim \lambda ^c$,
which is valid if the first three Shannon-Khinchin axioms (see supplementary material) are valid (the fourth, the composition axiom, can be violated),
and if the entropy is of so-called trace form,
which means that it can be expressed as $S=\sum_i^W g(p_i)$, where $p_i$ is the probability for state $i$, and $g$ some function.
The second scaling exponent $d$ is obtained from a scaling relation that involves the re-scaling of the number of states  $W \to W^a$.
With these two scaling exponents $c$ and $d$ it becomes possible to classify sub-exponentially growing systems that fulfil the first three
Shannon-Khinchin axioms \cite{hanel-thurner11a}.
Further, the exponents $c$ and $d$ characterize the extensive entropy, $S_{c,d} \sim \sum \Gamma(1+d, c\log(p_i))$.
Practically all entropies that were suggested within the past three decades, are special cases of this $(c,d)$-entropy,
including Boltzmann-Gibbs-Shannon entropy ($c=1$, $d=1$),
Tsallis entropy ($d=0$),
Kaniadakis entropy ($c=1$, $d=1$) \cite{kaniadakis02},
Anteonodo-Plastino entropy  ($c=1$, $d>0$) \cite{Anteneodo1999},
and all others that  fulfil the first three Shannon-Khinchin axioms.
In \cite{hanel-thurner11b} it was then shown that the exponents $c$ and $d$ are tightly related with phasespace growth of the underlaying systems.
In fact, they can be derived from the knowledge of $W(N)$, ${1}/{(1-c)}= \lim_{N \to \infty} N W' /  W$,  and
$d= \lim_{N \to \infty}  \log  W  \left( W /  (NW') +c-1 \right)$.

For super-exponential systems such a classification is hitherto missing.
These systems include important examples of stochastic complex systems that
form new states as a result of the interactions of elements. These are systems that--besides
their combinatorial number of states (e.g. exponential)--form additional states that emerge as {\em structures}
from the components. The total number of states then grows super-exponentially with respect to system size,
e.g. the number of elements.
Stochastic systems with elements that can occupy several states (more than one) and that can form structures
with other elements, are generally super-exponential systems.
{It} was pointed out in \cite{Jensen16} that such systems might exhibit non-trivial thermodynamical properties.

An example for such systems are magnetic coins of the following kind.
Imagine a set of $N$ coins that come in two states, up and down. There are $2^N$
states. However, these coins are ``magnetic'', and any two of them can stick to each other, forming a new bond state (neither up nor down).
If there are $N=2$ coins, there are five states: the usual four states, uu, ud, du, dd, and a fifth state `bond'.
If there are $N=3$ coins, there are 14 states, the $2^3$ combinatorial states, and six states involving bond states:
state 9 is bond between coin 1 and 2, with the third  coin up, state 10 is the same bond state with the third coin down,
state 11 is a bond between 1 and 3 with the second con up, 12 the same bond with the second coin down, state 12 is a bond between
2 and 3,  with the first state up, and finally, state 14 is the bond between 2 and 3 with the first coin down.
It can be easily shown that the recursive formula for the number of states is,
$W_{}(N+1) = 2 W_{}(N) + N W_{}(N-1)$, which, for large $N$, grows as $W_{}(N) \sim N^{N/2} e^{2 \sqrt{N}}$, see \cite{Jensen16}.

In this paper we show that it is indeed possible to find a complete classification of complex stochastic systems, including the super-exponential case.
By expanding a generic phasespace volume $W_{}(N)$ in a Poincar\'{e} expansion, we will see that for any possibility
of phase space growth, there exists a sequence of unique expansion coefficients that are nothing but scaling exponents
that describe systems in their large size limit.
The {\em set} of scaling exponents gives us the full classification of complex systems in the sense that two systems belong to the same universality class,
if it is possible to rescale one into the other with exactly these exponents.
The framework presented here has been proposed in \cite{Hanel_thurner13}
and generalizes the classification approach of \cite{hanel-thurner11a,hanel-thurner11b}.
It includes the sub-exponential systems as a special case.
We show further that these exponents can be used straight forwardly to express--with a few additional requirements--the
corresponding extensive entropy, which is the basis for the thermodynamic properties of the system.
Finally, we see  in several examples that many systems are fully characterized by a very few exponents.
Technical details and auxiliary results are presented in the supplementary material. We reference to the supplementary material in the corresponding parts of the main text. However, readers may also go through the supplementary material before they continue reading.
We use the following notation for applying a function $f$ for $n$ times, $f^{(n)}(x) = \underbrace{f(\dots(f(x))\dots)}_{n \ times}$.

\section{Rescaling phasespace}

Suppose that phasespace volume depends on system size $N$ (e.g. number of elements) as $W(N)$.
We use the  Poincar\'{e} asymptotic expansion for the $l+1$ th logarithm of $W$,
\begin{equation}
	\log^{(l+1)} W(N) = \sum_{j=0}^{n} c_j \phi_{j}(N) + {\cal O} (\phi_n(N)) \quad ,
\label{Poin}
\end{equation}
where $\phi_{j}(N) = \log^{({j}+1)}(N)$ for $N \rightarrow \infty$.
A uniqueness theorem (see e.g. \cite{copson}) states that the asymptotic expansion exists and is uniquely determined for any $W(N)$
for which $\log^{(l+1)} W(N) = {\cal O} (\phi_0(N))$, see supplementary material.

To see how the exponents $c_j$ correspond to scaling exponents, let us define a sequence of re-scaling operations,
\begin{equation}
	r^{(n)}_\lambda(x) = \exp^{(n)}[\lambda \log^{(n)}(x)]\quad  .
\end{equation}
For example $r^{(0)}_\lambda(x) = \lambda x$, $r^{(1)}_\lambda(x) = x^\lambda$, etc. Obviously, $r^{(n)}_1(x) = x$.
The scaling operations obey the composition rule
\begin{equation}
	r^{(n)}_\lambda[r^{(n)}_{\lambda'}(x)] = r^{(n)}_{\lambda \lambda'}(x) \quad .
\label{eq:resc}
\end{equation}
We can now investigate the scaling behavior of the phasespace volume in the thermodynamic limit, $N \gg 1$.
The leading order of the scaling is given by the first rescaling $r_0$.
We show in the supplementary material that the rescaling of phasespace is asymptotically described by
\begin{equation}\label{eq:scaling}
	W(r^{(0)}_\lambda(N)) \sim r^{(l)}_{\lambda^{c^{(l)}_{0}}}(W(N)) \ \Rightarrow \ \frac{\log^{(l)} W(\lambda N)}{\log^{(l)} W(N)} \sim \lambda^{c^{(l)}_{0}} \quad ,
\end{equation}
where $c^{(l)}_{0} \in \mathds{R}$ is the leading exponent, and $l$ is determined from the condition that $c_0^{(l)}$ should be finite.
Thus, to leading order, the sample space grows as $W(N) \sim \exp^{(l)}\left(N^{c^{(l)}_{0}}\right)$.
We now identify the scaling laws for the sub-leading corrections through higher-order rescalings $W(r^{(k)}_\lambda(N))$.
We get (see supplementary material)
\begin{equation}\label{eq:sub}
	\frac{\log^{(l)} W(r^{(k)}_\lambda(N))}{\log^{(l)} W(N)} \prod_{j=0}^{k-1}
	\left(\frac{\log^{(j)}(r^{(k)}_\lambda(N))}{\log^{(j)}(N)}\right)^{- c^{(l)}_{j}} \sim \lambda^{c^{(l)}_{k}} \quad .
\end{equation}
Equivalently, one can express this relation as,
$W(r^{(k)}_\lambda(N)) \sim  r^{(l)}_{\sigma_k(N)}(W(N))$,
where $\sigma_k(N) = \prod_{j=0}^{k} \left(\frac{\log^{(j)}(r^{(k)}_\lambda(N))}{\log^{(j)}(N)}\right)^{c^{(l)}_{j}}$.
To extract $c^{(l)}_{j}$, take the derivative of Eq. (\ref{eq:scaling}) w.r.t. $\lambda$, set $\lambda=1$ and {consider} the limit $N \rightarrow \infty$.
For the leading scaling exponent we obtain
\begin{equation}
c^{(l)}_{0} = \lim_{N \rightarrow \infty} \frac{N W'(N)}{\prod_{i=0}^l \log^{(i)}W(N)} \quad . \label{depp}
\end{equation}
The scaling exponent corresponding to the $k$-th order is obtained in a similar way and reads,
\scriptsize
\begin{equation}\label{eq:clk}
	c^{(l)}_{k} =
	\lim_{N \rightarrow \infty}  \log^{(k)}(N) \left( \log^{(k-1)}(N)  \left( \dots \left( \log(N)  \left(\frac{N W'(N)}{\prod_{i=0}^l \log^{(i)}W(N)}-c^{(l)}_{0}\right)
	- c^{(l)}_{1}\right)\dots \right) -c^{(l)}_{(k-1)}\right)
\end{equation}
\normalsize
 This expression is not identically {equal} to zero, because the expression on the r.h.s. of Eq. (\ref{depp}) becomes $c^{(l)}_{0}$ only in the limit.
As a result, the phasespace volume grows as
\begin{equation}
	W(N) \sim \exp^{(l)}\left[\prod_{j=0}^{n} \left(\log^{(j)}(N)\right)^{c^{(l)}_{j}}\right] \quad,
\end{equation}
which is nothing but the Poincar\'{e} asymptotic expansion in Eq. (\ref{Poin}).
In the supplementary material we show that the formulas for $c_j$, given by the theory of asymptotic expansions,
correspond to the formulas for scaling exponents $c_j^{(l)}$ and therefore it is indeed possible to express {\em any} $W(N)$
in terms of an asymptotic expansion that is based on the sequence $\phi_n(N)$.
The expansion coefficients are scaling exponents determined by the rescaling of phasespace.
{Here $n$ denotes the minimal number of expansion terms. In the typical situations, only a few scaling exponents are non-zero. If all exponents are non-zero, we can truncate the expansion after a few terms and still preserve a high level of precision. In many realistic situations it is enough to consider $n=2$.}
The estimation of the leading order exponent can be tricky, 
because looking for the order $l$ incorporates calculation of several infinite limits.
Therefore, it is convenient to use an approach based on the corresponding extensive entropy.

\section{The extensive entropy}

The extensive entropy can be obtained by following an idea exposed in \cite{hanel-thurner11a,hanel-thurner11b}.
Let's assume a so-called {trace-form} entropy for some probability distribution $P = (p_1,\dots,p_W)$
\begin{equation}\label{eq:trace}
	S_g(p) = \sum_{i=1}^W g(p_i) \quad ,
\end{equation}
where $g$ is some function.
The aim is to find such a function $g$, for which the entropy functional $S_g$ is {\em extensive} for a given $W(N)$.
Assuming that no prior information about the system is given, we consider uniform probabilities $p_i = 1/W$.
The extensivity condition can be expressed by an equation for $g$, which is \cite{hanel-thurner11b}
\begin{equation}\label{eq:ext}
	S_{g}(W(N)) =  W(N)\, g(1/W(N)) \sim N \quad \mathrm{for} \ N \gg 1 \quad .
\end{equation}
{Alternatively, it is possible to define the extensive entropy as the solution of Euler's differential equation, see also \cite{biro},
\begin{equation}
N\, \frac{\mathrm{d} S(W(N))}{\mathrm{d}N} = S(W(N)).
\end{equation}}
The question now is, how the scaling exponents of $W(N)$ are related to scaling exponents of $S_g(W)$.
We begin with the first scaling operation $r^{(0)}$.
One can show that for $N \gg 1$, we have
\begin{equation}\label{eq:ent}
	S_g(r_\lambda^{(0)}(W)) \sim r^{(0)}_{\lambda^{d_0}}(S_g(W))
	\Rightarrow  \lambda \frac{g\left(\frac{1}{\lambda W(N)}\right)}{g\left(\frac{1}{W(N)}\right)} \sim \lambda^{d_0} \quad  .
\end{equation}
Thus, $g(x) \sim (1/x)^{d_0-1}$ for $x \rightarrow 0$.
Again, it is possible to determine the relation for the $n$ th scaling exponent
\begin{equation}\label{eq:entsub}
	\frac{g(1/r_\lambda^{(n)}(W))\, r_\lambda^{(n)}(W)}{g(1/W) \, W} \prod_{j=0}^{k-1} \left(\frac{\log^{(j)}
	(r_\lambda^{(n)}(W))}{\log^{(j)}(W)}\right)^{-d_j} \sim \lambda^{d_n} \quad ,
\end{equation}
or equivalently,
$S_g(r_\lambda^{(n)}(W)) \sim r_{\rho_n(W)}^{(0)}(S_g(W))$,
where $\rho_n(W) = \prod_{j=0}^{n} \left(\frac{\log^{(j)} (\lambda^{(k)}(W))}{\log^{(j)}(W)}\right)^{d_j}$.
We can extract the scaling exponents $d_n$ by the same procedure as for $c^{(l)}_{k}$ by taking the derivative w.r.t. $\lambda$,
setting $\lambda=1$ and performing the limit.
For the first exponent we get
\begin{equation}
	d_0 = \lim_{W \rightarrow \infty} \left(1 - \frac{g'(1/W)}{W g(1/W)}\right) \quad  .
\end{equation}
De L'Hospital's rule and applying the extensivity condition of Eq. (\ref{eq:ext}) gives $g'(W(N)) \sim N$, and
\begin{equation}
	d_0 = \lim_{N \rightarrow \infty} \frac{W(N)}{N W'(N)} \quad .
\end{equation}
We mentioned this result already above.
The $n$ th term can be found analogously to be
\scriptsize
\begin{equation}\label{eq:dn}
	d_n = \lim_{N \rightarrow \infty} \log^{(n)}(W) \left(\log^{(n-1)}(W) \left(\dots  \left( \log(W)  \left(\frac{W(N)}{N W'(N)}-d_0\right)
	-d_1 \right) \dots \right) -d_{n-1}\right)\quad  .
\end{equation}
\normalsize
We can now relate the scaling exponents $c^{(l)}_{k}$ and $d_n$ by comparing Eqs. (\ref{eq:clk}) and (\ref{eq:dn}).
For this we use a similar notation as for the exponents $c^{(l)}_k$ and assign $d^{(l)}_0 \equiv d_l$
to the first  non-zero exponent, $d_l \neq 0$.
All higher terms are denoted by $d^{(l)}_k = d_{l+k}$.
Using the fact that $N \sim (\log^{(l)}W)^{1/c^{(l)}_0}$, we finally obtain
\begin{eqnarray}
        \begin{array}{ll}
                    d^{(l)}_0 = \frac{1}{c^{(l)}_{0}}  \\[.3cm]
         d^{(l)}_k = -\frac{c^{(l)}_{k}}{c^{(l)}_{0}}, & k = 1,2,\dots \quad .\\
        \end{array}
\end{eqnarray}
The corresponding extensive entropy can now be characterized by the function $g(x)$,
which scales as
\begin{equation}\label{eq:g}
g^{{(l,n)}}(x) \sim x \prod_{j=0}^{n} \left(\log^{(j+l)} \frac{1}{x} \right)^{d^{(l)}_j} \quad \mathrm{for} \ x \rightarrow 0\, .
\end{equation}
the corresponding entropy scales as
\begin{equation}
	S^{{(l,n)}}_g(W) \sim \prod_{j=0}^{n} \left(\log^{(j+l)}W \right)^{d^{(l)}_j} \quad.
\label{scal}
\end{equation}
This equation is nothing but the asymptotic expansion of $\log S_g$ in terms of $\phi_{n+l}(N) = \log^{(n+l+1)}(N)$;
the coefficients are again the scaling exponents that correspond to the rescaling of the entropy.

Note that the entropy approach allows us to obtain additional restrictions for the scaling exponents if further information
about the system is available.
For example, many systems fulfil the first three of the four Shannon-Khinchin ({SK}) axioms, see supplementary material.
There we also show that it is possible to find a representation of the entropy that obeys the three axioms and the
scaling in Eq. (\ref{scal}).
In this case $g(x)$ can be expressed as
\begin{equation}\label{eq:entropy}
	g^{{(l,n)}}_{\left(d^{(l)}_0,\dots,d^{(l)}_n\right)}(x) = \int_0^{x} \prod_{i=0}^{n} \left[a_i + [1+\log]^{(i+l)}\left(\frac{1}{y}\right)\right]^{d^{(l)}_i} \mathrm{d} y \quad,
\end{equation}
where $a_i$ are constants. One possible choice for those is
\begin{equation}\label{eq:ai}
	a_i  =  \max\left\{ -1 - \frac{d^{(l)}_i}{(n-l)d^{(l)}_0},0\right\} \quad .
\end{equation}
The axioms impose restrictions on the range of scaling exponents.
({SK2}) requires that $d^{(l)}_0 > 0$; ({SK3}) requires that $d^{(0)}_0 \equiv d_0 < 1$.
The resulting entropy can be expressed by Eq. (\ref{eq:ent}).
One can trivially adjust the entropy minimal value, such that for the totally ordered state, $\mathcal{S}_g(1) = 0$.
This is obtained by rescaling
\begin{equation}
	\mathcal{S}_g(P) = r^{(-1)}_\lambda(S_g(P)) = \left(\sum_{i=1}^W g(p_i)\right) - g(1) \quad ,
\end{equation}
where $\lambda = \exp(g(1))$. Note that the form of the entropy in Eq (\ref{eq:entropy})
is equivalent to $(c,d)$-entropy for $c=1-d_0$ and $d=d_1$,
and $d_j = 0$ for all $j \geq 2$.

\section{Examples}

We conclude with several examples of systems that are characterized by different sets of scaling exponents.

{\em Exponential growth: the random walk}.
Imagine the ordinary random walk with two possibilities at any timestep--a step to the left, or to the right.
The number of possible configurations (i.e. possible paths) after $N$ steps is
\begin{equation}
	W_{}(N+1) = 2 W_{}(N) \quad,
\end{equation}
which means exponential phasespace growth, $W_{}(N) = 2^N$.
We obtain $l=1$, $c^{(1)}_0=1$ and $c^{(1)
}_j = 0$, for $j \geq 1$, and
for the exponents of the entropy $d_0 = 0$, $d_1\equiv d^{(1)}_0=1$ and $d_j = 0$, for $j \geq 2$.
This set of exponents belongs to the class of $(c,d)$-entropies described in \cite{hanel-thurner11a} for $c=1-d_0 = 1$, and $d=d_1=1$.
They correspond to the scaling exponents of the Shannon entropy:
from (\ref{eq:g}) we obtain that $g(x) \sim x \log x$ and from (\ref{scal}) we get $S(W) \sim \log W$, which is Boltzmann entropy.
It is not immediately apparent what the entropy of a random walk should be.
However, the random walk is equivalent to spin system of $N$ independent spins, the $2^N$ different paths correspond one-to-one to the
$2^N$ configurations in the spin model, where the role entropy of it is clear.
Obviously, for the random walk,  (SK 1-3) are applicable.

 {\em Sub-exponential growth: the aging random walk}.
In this variation of the random walk we impose correlations on the walk.
After the first random choice (left or right) the walker goes one step in that direction.
The second random choice is followed by two steps in the same direction, the next step is followed by three steps in the same direction, etc.
For $k$ independent choices, one has to make {$N = \sum_{i=1}^{k-1} i = 1/2 k(k-1)$ steps.
For this walk, we get that the number of possible paths is
\begin{equation}
	W_{}\left(N+k\right) = 2 W_{}\left(N\right) \quad ,
\end{equation}
which leads to $W(N) = 2^{N/k} \sim 2^{k/2}$. For $N \gg 1$, we have $k \approx \sqrt{N}$, and we obtain a stretched exponential (sub-exponential) asymptotic behavior, $W_{}(N) \sim 2^{\sqrt{N}}$.}
The order is again $l=1$ and the exponents are $c^{(1)}_0 = 1/2$ and $c^{(1)}_j = 0$, for $j \geq 1$.
In terms of the $d$ exponents we have $d_0=0$ and $d_1 \equiv d^{(1)}_0 =2$.
Therefore, the three SK axioms are applicable and the resulting extensive entropy belongs to the class of entropies
characterized by the Anteodo-Plastino entropy, since we have $g(x) \sim x (\log x)^2$ and $S(W) \sim (\log W)^2$.
This entropy is the special case of the $(c,d)$-entropy  for $c=1$ and $d=2$, see \cite{hanel-thurner11a}.

 {\em Super-exponential growth: magnetic coins}.
Consider $N$ coins with two states (up or down). These coins are magnetic, so that any two can stick to each other to create a pair
which is a third state obtained by interactions of elements (one possible configuration).
As mentioned before, in \cite{Jensen16} it is shown that the phasespace volume can be obtained recursively
\begin{equation}
	W_{}(N+1) = 2 W_{}(N) + N W_{}(N-1) \quad .
\end{equation}
For $N \gg 1$, we get $W_{}(N) \sim N^{N/2} e^{2 \sqrt{N}}$, which yields $l=1$, and the scaling exponents
$c^{(1)}_0 = 1$, $c^{(1)}_1 = 1$ and $c^{(1)}_j = 0$, for $j \geq 2$.
The scaling exponents of the entropy are $d_0=0$, $d_1 \equiv d^{(1)}_0=1$, and $d_2 \equiv d^{(1)}_1=-1$.
For the entropy this means, that $g(x) \sim x \log x/ \log \log x$ and $S(W) \sim \log W/\log \log W$.
This case is {\em not} contained in the class of $(c,d)$-entropies, because the third exponent,
corresponding to the doubly-logarithmic correction, is not zero. Actually we obtain $c=1$ and $d=1$,
which would naively indicate Shannon entropy. However, the correction makes the system clearly super-exponential.
The SK axioms are still applicable, the class of accessible entropy formulas is restricted by ({SK2}).
For example, for the representative entropy Eq. (\ref{eq:entropy}) we find that $a_0 \geq 0$ and $a_1 \geq 0$, see supplementary material.

{\em Super-exponential growth: random networks}.
Imagine a random network with $N$ nodes.
When a new node is added, there emerge $N$ new possible links, which gives us $2^N$ new possible configurations for each configuration of the network with $N$ links.
We obtain the recursive growth equation
\begin{equation}
	W_{}(N+1) = 2^N W_{}(N) \quad ,
\end{equation}
which leads to $W_{}(N) = 2^{{N}\choose{2}}$, as expected.
For this phasespace growth, we obtain $l=1$, $c_0^{(1)} = 2$ and  $c_j^{(1)} = 0$ for $j \geq 1$, and
$d_0=0$ and $d_1 \equiv d^{(1)}_0 =\frac{1}{2}$.
The corresponding entropy can be expressed by $g(x) \sim x (\log x)^{1/2}$,
and $S(W) \sim (\log W)^{1/2}$. The entropy corresponds to the class of compressed exponentials,
which are super-exponential, however, the entropy belongs to the class of $(c,d)$-entropies for $c=1$ and $d=1/2$.
Because all exponents are positive the entropy observes the SK axioms.

 {\em Super-exponential growth: the cascading random walk}.
 Consider a generalization of the random walk, where a walker can take a left or right step, but it can also
 split into two walkers, one of which then goes left, the other to the right.
Each walker can then go left, right, or split again (multiple walkers can occupy the same position).
The number of possible paths after $N$ steps is
\begin{equation}
W_{}(N+1) = 2 W_{}(N)+W_{}(N)^2 \quad ,
\end{equation}
where the first term reflects the left/right decisions, the second the splittings.
We have $W_{}(N) = 2^{(2^{N-1})}-1$, and find
that $l=2$, $c_0^{(2)}=1$ and $c_j^{(2)}=0$, for $j \geq 1$, and $d_0 =0$, $d_1=0$ and $d_2 \equiv d^{(2)}_0 = 1$.
The corresponding extensive entropy is $g(x) \sim x \, \log \log(x)$ and scales as $S(W) \sim \log \log W$.
Because the coefficients are not negative, SK axioms are applicable.
However, even though all correction scaling exponents are zero, the system cannot be described in terms of $(c,d)$-entropies,
because $l=2$.
We would naively obtain that $c=1$ and $d=0$, which would wrongly correspond to Tsallis entropy. 
Alternatively, we can think of an example of a spin system with the same scaling exponents. In this case, $N$ would not describe the size of a system, but its dimension. For $N=1$, we would have two particles on the line, for $N=2$ we have 4 particles forming a square,
for $N=3$ we have a cube with 8 particles in its vertices, etc. In general, we can think of a spin system of particles sitting
on the vertices of a $N$-dimensional hypercube.
The number of particles is naturally $2^N$ and for two possible spins we obtain $W(N) = 2^{(2^N)}$.

\section{Conclusions}
We introduced a comprehensive classification of complex systems in the thermodynamic limit based on the rescaling properties of their phasespace volume.
From a scaling-expansion of the phasespace growth with system size, we obtain a set of scaling exponents,
which uniquely characterize the statistical structure of the given system.
Restrictions on the scaling exponents can be obtained with further information about the system.
In this context we discuss the first three Shannon-Khinchin axioms, which are valid for many complex systems. The set of exponents further determine the scaling exponents of the corresponding extensive entropy,
which plays a central role in the thermodynamics of statistical systems. Thermodynamics is not the only context where entropy appears.
As was shown in \cite{thurner_corominas_hanel17} for many complex systems the functional expressions for entropy depend on the context,
in particular if one talks about the thermodynamic (extensive) entropy, the information theoretic entropy,
or the entropy that appears in the maximum entropy principle.
It remains to be seen if for super-exponential systems there exists an underlying relation between the scaling exponents of the extensive entropy,
and the exponents obtained from a information theoretic, or maximum entropy description of the same complex systems.

\section*{Acknowledgements}
We thank the participants of CSH workshop for helpful initial discussion, in particular Henrik Jeldtoft Jensen, Tam\'{a}s S\'{a}ndor Bir\'{o}, Piergiulio Tempesta and Jan Naudts. This work was supported by the Austrian Science Fund (FWF) under project I3073.



\section*{References}
\bibliographystyle{iopart-num}
\bibliography{references}

\appendix
\section{Supplementary material}
\subsection*{Shannon-Khinchin axioms}
The Shannon-Khinchin axioms read:
\begin{itemize}
\item (SK1) Entropy is a continuous function of the probabilities $p_i$ only, and should not explicitly depend on any other parameters.
\item (SK2) Entropy is maximal for the equi-distribution $p_i=1/W$.
\item (SK3) Adding a state $W+1$ to a system with $p_{W+1}=0$ does not change the entropy of the system.
\item (SK4) Entropy of a system composed of 2 sub-systems $A$ and $B$, is $S(A+B)=S(A)+S(B |A)$.	
\end{itemize}
They state requirements that must be fulfilled by any entropy. For ergodic systems all four axioms hold. For
non-ergodic ones the composition axiom (SK4) is explicitly violated, and only the first three (SK1-SK3) hold. If all four axioms hold
the entropy is uniquely determined to be Shannon's; if only the first three axioms hold, the entropy is given by the $(c,d)$-entropy
\cite{hanel-thurner11a,hanel-thurner11b}. The SK axioms were formulated in the context of information theory but are also sensible for
many physical and complex systems.

Given a trace form of the entropy as in Eq. (\ref{eq:trace}), the SK axioms imply the restrictions on $g(x)$:
{(SK1)} implies that $g$ is a continuous function, {(SK2)} means that $g(x)$ is concave,
and {(SK3)} that $g(0)=0$. For details, see \cite{hanel-thurner11a}.

\subsection*{Rescaling in the thermodynamic limit}
We first prove a theorem which determines the general form of rescaling relations in the thermodynamic limit for any general function.
\begin{theorem}
 Let $g(x)$ be a positive, continuous function on $\mathds{R}^+$. Let us define the function $z(\lambda): \mathds{R}^+ \rightarrow \mathds{R}^+$
\begin{equation}
z(\lambda) := \lim_{x \rightarrow \infty} \frac{g(r^{(n)}_\lambda(x))}{g(x)}.
\end{equation}
Then, $z(\lambda) = \lambda^c$ for some $c \in \mathds{R}$.
\end{theorem}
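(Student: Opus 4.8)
The plan is to recognize the defining relation for $z$ as a disguised multiplicative Cauchy functional equation, and to exploit the group structure that the composition rule (\ref{eq:resc}) imposes on the family $\{r^{(n)}_\lambda\}_{\lambda>0}$. First I would record the normalization $z(1)=1$, which is immediate since $r^{(n)}_1(x)=x$ forces the ratio $g(r^{(n)}_1(x))/g(x)$ to equal $1$ identically. The central step is then to show that $z$ is a homomorphism, $z(\lambda\mu)=z(\lambda)z(\mu)$.

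To obtain multiplicativity I would use (\ref{eq:resc}) to write $r^{(n)}_{\lambda\mu}(x)=r^{(n)}_\lambda(r^{(n)}_\mu(x))$ and insert the intermediate value $y:=r^{(n)}_\mu(x)$, factoring
\begin{equation}
\frac{g(r^{(n)}_{\lambda\mu}(x))}{g(x)}=\frac{g(r^{(n)}_\lambda(y))}{g(y)}\cdot\frac{g(r^{(n)}_\mu(x))}{g(x)} \quad .
\end{equation}
The second factor tends to $z(\mu)$ by definition. For the first factor the key observation is that $y=\exp^{(n)}(\mu\log^{(n)}(x))\to\infty$ as $x\to\infty$ (since $\mu>0$ and $\log^{(n)}(x)\to\infty$), so that $g(r^{(n)}_\lambda(y))/g(y)\to z(\lambda)$ along this path. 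Taking $x\to\infty$ then gives $z(\lambda\mu)=z(\lambda)z(\mu)$. Combined with $z(1)=1$ and the codomain hypothesis $z(\mathds{R}^+)\subseteq\mathds{R}^+$, this also yields $z(\lambda)z(1/\lambda)=1$, so $z$ is strictly positive and finite throughout.

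With multiplicativity in hand, the substitution $h(t):=\log z(e^t)$ converts the relation into the additive Cauchy equation $h(s+t)=h(s)+h(t)$ on $\mathds{R}$, whose only solutions are linear, $h(t)=ct$, provided some regularity is available. This regularity is exactly the point where the argument could fail, since the additive Cauchy equation admits wildly pathological (nonmeasurable) solutions in its absence. I would close this gap by noting that for each sufficiently large integer $m$ the map $\lambda\mapsto g(r^{(n)}_\lambda(m))/g(m)$ is continuous in $\lambda$ (as $r^{(n)}_\lambda(m)$ depends continuously on $\lambda$ and $g$ is continuous), so that $z(\lambda)=\lim_{m\to\infty}g(r^{(n)}_\lambda(m))/g(m)$ is a pointwise limit of continuous functions, hence Lebesgue measurable. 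Measurability of $h$ then forces $h(t)=ct$, and exponentiating gives $z(\lambda)=\lambda^c$ with $c\in\mathds{R}$, as claimed.

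The main obstacle is the regularity issue just described: everything else is an essentially formal consequence of the one-parameter group law (\ref{eq:resc}). The two places requiring genuine care are (i) the limit interchange in the multiplicativity step, where one must verify that the inner ratio converges to $z(\lambda)$ \emph{along the moving argument} $y(x)$ rather than along a fixed sequence, which is legitimate precisely because the limit defining $z(\lambda)$ exists as its argument tends to infinity and $y(x)\to\infty$; and (ii) securing measurability of $z$ so as to exclude the pathological Cauchy solutions. A physics-level treatment would simply assume $z$ continuous, but the pointwise-limit-of-continuous-functions argument above makes the measurability, and hence the conclusion, rigorous under the stated hypotheses alone.
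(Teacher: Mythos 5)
Your proof is correct, and its core step coincides with the paper's: you factor $g(r^{(n)}_{\lambda\mu}(x))/g(x)$ through the intermediate point $y=r^{(n)}_\mu(x)$, use the group law $r^{(n)}_\lambda[r^{(n)}_\mu(x)]=r^{(n)}_{\lambda\mu}(x)$, and note that $y\to\infty$ as $x\to\infty$ so that the inner ratio converges to $z(\lambda)$ along the moving argument --- this is exactly how the paper derives $z(\lambda\mu)=z(\lambda)z(\mu)$. Where you genuinely go beyond the paper is the last step. The paper simply asserts that ``the only class of functions satisfying the functional equation above are power functions,'' which, as you correctly point out, is false without a regularity hypothesis: Hamel-basis constructions give nonmeasurable solutions of the Cauchy equation that are not powers. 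Your observation that $z$ is a pointwise limit of the continuous functions $\lambda\mapsto g(r^{(n)}_\lambda(m))/g(m)$ (over integers $m$ large enough that $\log^{(n)}(m)$ is defined and positive), hence Baire class one and in particular Lebesgue measurable, is precisely what is needed to invoke the classical theorem that measurable solutions of Cauchy's equation are linear, giving $z(\lambda)=\lambda^c$. Note that the paper's stated hypothesis of continuity of $g$ is used only to justify the limit manipulations, not to exclude pathological solutions of the functional equation, so your measurability argument is not a pedantic flourish but a repair of the one real gap in the paper's own proof.
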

\begin{proof}
From the definition of $z(\lambda)$, it is straightforward to show that $z(\lambda \lambda') = z(\lambda) z(\lambda')$, because
\begin{eqnarray*}
	z(\lambda \lambda') 	=  \lim_{x \rightarrow \infty} \frac{g(r^{(n)}_{\lambda \lambda'}(x))}{g(x)}
					= \lim_{x \rightarrow \infty} \frac{g(r^{(n)}_{\lambda \lambda'}(x))}{g(r^{(n)}_{\lambda}(x))} \frac{g(r^{(n)}_{\lambda}(x))}{g(x)}\\
					= \lim_{r^{(n)}_\lambda(x) \rightarrow \infty }
						 \frac{g(r^{(n)}_{\lambda'}[r^{(n)}_{\lambda}(x)])}{g(r^{(n)}_{\lambda}(x))}  \lim_{x \rightarrow \infty} \frac{g(r^{(n)}_\lambda(x))}{g(x)}
					= z(\lambda') z(\lambda) \quad .
\end{eqnarray*}
For the computation we used the group property of rescaling  in Eq. (\ref{eq:resc}) and the continuity of $g$.
The only class of functions satisfying the functional equation above are power functions, $z(\lambda) = \lambda^c$.
\end{proof}
Let us take the first scaling relation of the sample space $W(r^{(0)}_\lambda(N)) = W(\lambda N)$.
From the previous theorem we obtain
\begin{equation}
	\frac{W(\lambda N)}{W(N)} \sim \lambda^{c_{0}}  \ \Rightarrow \ W(r^{(0)}_\lambda(N)) \sim r^{(0)}_{\lambda^{c_{0}}}(W(N)) \quad .
\end{equation}
It may happen that $c_{0}$ is infinite.
Thus, we may need to use higher-order scaling for the sample space, i.e.,
$r^{(l)}_{\lambda^{c_{0}}}(W(N))$, as shown in the main text.
$l$ is determined by the condition that the scaling exponent should be finite.
The first correction term is given by the scaling $W(r^{(1)}_\lambda(N)) = W(N^\lambda)$.
To obtain the sub-leading correction, we have to factor out the leading growth term.
This means that the scaling relation for the first sub-leading correction looks like
\begin{equation}
	\frac{(\log^{(l)} W(N^\lambda))/N^{c^{(l)}_{0} \lambda}}{(\log^{(l)} W(N))/N^{c^{(l)}_{0}}} \sim \lambda^{c^{(l)}_{1}} \quad ,
\end{equation}
which is again a consequence of the above theorem.
To obtain the corresponding scaling relations for higher-order scaling exponents for the sample space (\ref{eq:sub}),
we need to factor out all previous terms corresponding to lower-order scalings, so the scaling relation looks like
\begin{equation}\label{eq:sub}
	\frac{\log^{(l)} W(r^{(k)}_\lambda(N))}{\log^{(l)} W(N)} \prod_{j=0}^{k-1}
	\left(\frac{\log^{(j)}(r^{(k)}_\lambda(N))}{\log^{(j)}(N)}\right)^{- c^{(l)}_{j}} \sim \lambda^{c^{(l)}_{k}}
\end{equation}
Because the left-hand side of this relation has the form of the function $z$ appearing in the theorem,
the validity of the relation is satisfied for $N \rightarrow \infty$.
Similarly, we can deduce the relations for scaling exponents that are associated with the extensive entropy.

\subsection*{Asymptotic expansion in terms of nested logarithms}
The asymptotic representation of $W(N)$ is obtained by the rescaling that corresponds to the Poincar\'{e} asymptotic expansion \cite{copson} of
$\log^{(l+1)}(W)$ in terms of $\phi_n(N) = \log^{(n+1)}(N)$ for $N \rightarrow \infty$.
Let us consider a function $f(x)$ with a singular point at $x_0$. It is possible to express its asymptotic properties in the neighborhood of $x_0$ in terms of the asymptotic series of functions $\phi_n(x)$, if $f(x) = {\cal O} (\phi_0(x))$ and $\phi_{n+1}(x) =  {\cal O} (\phi_n(x))$.
The series is given as
\begin{equation}
f(x) = \sum_{j=0}^{k} c_j \phi_j(x) + {\cal O} (\phi_j(x)) \quad .
\end{equation}
The coefficients can be calculated from the formulas in \cite{copson}
\begin{equation}
	c_k = \lim_{x \rightarrow x_0} \frac{f(x) - \sum_{j=0}^{k-1}c_j \phi_j(x)}{\phi_k(x)} \quad .
\end{equation}
In our case, i.e., for $N \rightarrow \infty$ and $\phi_n(N) = \log^{(n+1)}(N)$   the function $\log^{(l+1)}(W)$ can be expressed (for appropriate $l$) in terms of this series, and the coefficients $c_k^{(l)}$ are given by
\begin{eqnarray}
	c_k^{(l)} 	&=& \lim_{N \rightarrow \infty} \frac{\log^{(l+1)}(W) - \sum_{j=0}^{k-1}c_j^{(l)} \log^{(j+1)}(N)}{\log^{(k+1)}(N)}\nonumber\\
 			&=& \lim_{N \rightarrow \infty} \frac{\log\left(\log^{(l)}(W)/\prod_{j=0}^{k-1}\log^{(j)}(N)^{c_j^{(l)}}\right)}{\log^{(k+1)}(N)}\nonumber \quad .
\end{eqnarray}
Using L'Hospital's  rule and the derivative of the nested logarithm
\begin{equation}
	\frac{\mathrm{d} \log^{(n)}(x)}{\mathrm{d} x} = \frac{1}{\prod_{j=0}^{n-1} \log^{(j)}(x)}  \quad ,
\end{equation}
a straightforward calculation yields Eq. (\ref{eq:clk}).

\subsection*{Derivation of $g^{{(l,n)}}_{\left(d^{(l)}_0,\dots,d^{(l)}_n\right)}$}
Which entropy functional that fulfills axioms ({SK} 1-3)?
The choice is not unique, but a concrete entropy functional serves as a representative of the class in the thermodynamic limit.
The requirements imposed by the first three {SK} axioms are: $g(x)$ is continuous,
 $g(x)$ is concave, and $g(0) = 0$. From  Eq. (\ref{eq:g}) we have,
$g(x) \sim x \prod_{j=0}^{n} [\log^{(j+l)}\left(\frac{1}{x}\right)]^{d^{(l)}_j}$ for $x \rightarrow 0$,
which gives us the scaling for the values around zero.
Unfortunately, the presented form cannot be extended to the full interval $[0,1]$, because the domain of $\log^{(n)}(1/x)$ is $(0,1/\exp^{(n-2)}(1))$.
This can be fixed by replacing $\log^{(n)}$ by $[1+\log]^{(n)} = 1+ \log(1+ \log(\dots))$,
which is defined on the whole domain $(0,1]$, where $\lim_{x \rightarrow 0} [1+\log]^{(n)}(1/x) = +\infty$ and $[1+\log]^{(n)}(1) = 1$.
The scaling remains unchanged for $x \rightarrow 0$.

The second problem is that in general the function is not concave. For this we introduce the transformation
\begin{equation}
	f^\star(x) = \int_0^x \frac{f(y)}{y} \mathrm{d} y\, .
\end{equation}
The original function can be obtained by
\begin{equation}
	f(x) = x \frac{\mathrm{d} f^\star(x)}{\mathrm{d} x}.
\end{equation}
This transform turns an increasing/decreasing function to a convex/concave function, while the scaling for $x \rightarrow 0$ remains unchanged.
Let us write the function $g$ in the form of the transform
\begin{equation}
	g(x) \sim \int_0^{x} \prod_{j=0}^{n} \left[ [1+\log^{(n)}]\left(\frac{1}{y}\right)\right]^{d_j} \, \mathrm{d} y \quad .
 \end{equation}
Axiom ({SK3}) means $g(0) = 0$. This requires that the integrand should not diverge faster than $1/x$ for $x \rightarrow 0$.
This can be fulfilled for $d_0 \equiv d_0^{(0)} < 1$.

Because $[1+\log]^{(n)}(1/x)$ is a decreasing function, $g(x)$ is automatically concave if $d_n \geq 0$, since a product of positive,
decreasing functions is also decreasing.
However, for $d_n < 0$, $[1+\log]^{(n)}(1/x)^{d_n}$ is an increasing function from zero to one and the whole product may not be decreasing.
In order to solve this issue, we introduce a set of constants $a_i$ and write $g(x)$ in the form
\begin{equation}
	g(x) = \int_0^{x} \prod_{j=0}^{n} \left[a_j + [1+\log]^{(n)}\left(\frac{1}{y}\right)\right]^{d_j} \, \mathrm{d} y \quad .
 \end{equation}
The constants $a_i$ can be chosen to ensure that the integrand is a decreasing function.
We assume $a_i \geq -1$ to avoid problems with powers of negative numbers.
The second derivative of $g(x)$, i.e., the first derivative of the integrand is an increasing function and
$\frac{\mathrm{d}^2 g(x)}{\mathrm{d} x^2}|_{x \rightarrow 0^+} = - \infty$ for $d_l > 0$.
For $d_l < 0$, the entropy cannot be concave, so $d_l > 0$ is the restriction given by ({SK2}).
To obtain a negative second derivative on the whole domain $[0,1]$, it is therefore enough to investigate
$\frac{\mathrm{d}^2 g(x)}{\mathrm{d} x^2}|_{x = 1}$, which leads to the condition
\begin{equation}\label{eq:2d}
 	\left(\prod_{j=l}^{n} (1+a_j)^{d_j-1} \right) \left(- \sum_{j=l}^{n} \frac{d_j}{1+a_j}\right) \leq 0 \quad .
\end{equation}
Because $d_0^{(l)} \equiv d_l > 0$, we can choose $a_l = 0$. In the following terms, i.e., for $i > l$, $d_i$ can be both positive and negative.
Positive $d_i$ pose no problem, because the term corresponding to $d_i$, i.e. $-d_i/(1+a_i)$ is negative, so we can choose $a_i=0$.
When all $d_i$ are negative we can compensate the positive contribution of the negative terms by diminishing them through choice of appropriate $a_i$.
If we choose
\begin{equation}
	1+a_i = - \frac{d^{(l)}_i}{n d^{(l)}_0} \quad ,
\end{equation}
then Eq. (\ref{eq:2d}) becomes zero. If this is given together with previous results and summarize it as
\begin{equation}
	a_i  =  \max\left\{ -1 - \frac{d^{(l)}_i}{n d^{l}_0},0\right\} \quad ,
\end{equation}
which has been presented as in Eq. (\ref{eq:ai}) in the main text.
Clearly, this is not the only possible choice. Note that for all $d^{(l)}_i > 0$, one may even choose $a_i = - 1$.
On the other hand, for the case of magnetic coin model, one obtains that for $a_0 = 0$, $a_1 = 0$ as well.

Finally, let us show the connection to $(c,d)$-entropy derived in \cite{hanel-thurner11a}.
In this case, we assume only  $d_0$ and $d_1$ can be non-zero, which leads to
\begin{equation}
	g^{{(0,1)}}_{\left(d_0,d_1\right)}(x) = \int_0^{x} (1/y)^{d_0} (1+a_1+\log(1/x))^{d_1} \, \mathrm{d} y \quad .
\end{equation}
By the choice $a_1 = -1+\frac{1}{1-d_0}$, we get
\begin{equation}
	g_{(c,d)}(x) = \frac{e}{c^{d+1}} \Gamma(1+d,1-c \log x) \quad ,
\end{equation}
for $c = 1-d_0$ and $d = d_1$, which is nothing else than the Gamma entropy of \cite{hanel-thurner11a}.

\subsection*{Ordering of processes and classes of equivalence}
The set of scaling exponents form natural classes of equivalence with natural ordering.
Consider two discrete random processes $X(N)$ and $Y(N)$ with sample spaces $W_X(N)$ and $W_Y(N)$, respectively.
The corresponding sets of scaling exponents are denoted by $\mathcal{C}_X =\{c^{(l)}_0,c^{(l)}_1,\dots\}$,
and $\mathcal{C}_Y =\{\tilde{c}^{(\tilde{l})}_0,\tilde{c}^{(\tilde{l})}_1,\dots\}$.
One can introduce an ordering based on the scaling exponents. We write
\begin{equation}
X \prec Y \ (\ \mathcal{C}_X \prec \mathcal{C}_Y) \ \mathrm{if} \ \left\{
                                                                                   \begin{array}{ll}
                                                                                     l < l'\\
                                                                                     l = l', c^{(l)}_0 < \tilde{c}^{(\tilde{l})}_0\\
                                                                                     l = l', c^{(l)}_0 = \tilde{c}^{(\tilde{l})}_0, c^{(l)}_1 < \tilde{c}^{(\tilde{l})}_1\\
\mathrm{etc.}
                                                                                   \end{array}
                                                                                 \right. \quad .
\end{equation}
This is equivalent to lexicographic ordering. One can also introduce an ordering,
which takes into account only certain a number of correcting terms. So, for example
\begin{equation}
X \prec_0 Y \ (\ \mathcal{C}_X \prec_0 \mathcal{C}_Y) \ \mathrm{if} \ \left\{
                                                                                   \begin{array}{ll}
                                                                                     l < l'\\
                                                                                     l = l', c^{(l)}_0 < \tilde{c}^{(\tilde{l})}_0\\
\end{array}
                                                                                 \right. \quad .
\end{equation}
Similarly, one can define $\prec_k$, which takes into account only $k$ correction terms.
Additionally, it is possible to introduce an equivalence relation
\begin{equation}
	X \sim Y \ \mathrm{if}  \ \mathcal{C}_X \equiv \mathcal{C}_Y \ \Rightarrow \  l = l'; \   c^{(l)}_i = \tilde{c}^{(\tilde{l})}_i \ \forall i \quad .
\end{equation}
and also equivalence up to certain correction
\begin{equation}
	X \sim_k Y \ \mathrm{if}  \ \mathcal{C}_X \equiv \mathcal{C}_Y \ \Rightarrow \  l = l'; \   c^{(l)}_i = \tilde{c}^{(\tilde{l})}_i \ \forall i \leq k \quad .
\end{equation}
As an example, for magnetic coin model and random walk we have that $X_{\rm MC} \sim_0 X_{\rm RW}$, but  $X_{\rm MC} \not\sim X_{\rm RW}$.

\subsection*{Construction of a ``representative process''}
To understand the mechanism of how the scaling exponents correspond to the structure of a random process,
let us discuss a simple procedure to generally obtain processes with given scaling exponents $c^{(l)}_k$.
We start with a random variable $X_0$ with $N$ possible outcomes, so that $W_{X_0}(N) = \{1,\dots,N\}$.
The scaling exponents of this process are naturally $c^{(0)}_0 = 1$ and $c^{(0)}_k = 0$ for $k \geq 1$.
Let us construct a new variable by choosing subsets of $W_{X_0}(N)$.

First we can create all possible subsets of $W_{X_0}(N)$.
This defines a new variable $X_1$ with $W_{X_1}(N) = 2^{W_{X_0}(N)}$, and we get $c^{(1)}_0 = 1$.
Generally, the transform
\begin{equation}
	\mathfrak{2}: X \rightarrow \mathfrak{2}^{X} \quad ,
\end{equation}
where $\mathfrak{2}^X$ denotes a variable on all subsets of $X$.
One can easily show that this results in a  shift of scaling exponents $c^{(l)}_k \rightarrow c^{(l+1)}_k$, and $d^{(l)}_k \rightarrow d^{(l+1)}_k$,
because $W_{\mathfrak{2}^{X}}(N) = 2^{W_X(N)}$.
The interpretation of this transformation is the following:
Consider an ordinary random walk with two possible steps. If $X_0(N)$ denotes a number of steps of a random walker,
then $X_1(N) = \mathfrak{2}^{X_0(N)}$ denotes the number of possible paths.
When we apply the transform again, we obtain $X_2(N) = \mathfrak{2}^{X_1(N)}$.
This denotes the number of possible configurations of a random walk cascade, etc.
As a result, by more applications of $\mathfrak{2}$, we obtain processes with more complicated structure of the respective phasespace.

To construct processes with arbitrary exponents, let us think about a procedure, where we create only partial subsets,
which number $p(N)$ can be between
$N$ (no partitioning) and $2^N$ (full partitioning)
We denote this procedure by $\mathfrak{P}$.
This process can be understood as process corresponding to a correlated random walk. This means that not every step of the walk is independent, but some steps can be determined by the previous steps, which diminishes the number of possible configurations when compared to the uncorrelated random walk. The resulting random process is obtained as the composition of $l$ uncorrelated random walks (full partitioning) and a correlated random walk
\begin{equation}
X = \mathfrak{2}^{(l)}[\mathfrak{P}(X_0)] \quad .
\end{equation}
Let us now focus on the construction of correlated random walk with a pre-determined number of states given by $p(N)$.

First we consider the full set of subsets of $N$ elements with natural ordering,
\begin{equation}
	W_{\mathfrak{2}^X} =\left\{\{\},\{1\},\{2\},\dots,\{1,\dots,n\}\right\} \quad .
\end{equation}
The correlations can be represented by merging subsets to $p(N)$ sequences of length $\{s(1),\dots,s(p(N))\}$, i.e.,
\begin{equation}
	W_{\mathfrak{P}(X)} =\left\{ \underbrace{\{ \{\},\{1\},\{2\},\dots\}}_{s(1)},\dots,\underbrace{  \left\{ \dots, \left\{ 1,\dots,n \right\} \right\} }_{s(p(N))}\right\} \quad .
\end{equation}
This means that after one independent step, there are $s(1)-1$ dependent steps, after the second independent step, there are $s(2)-1$ dependent steps, etc. Let us determine the form of function $s$ for given $p(N)$. The function $s$ can be obtained from
\begin{equation}
	\sum_{i=1}^{p(N)} s(i) = 2^N \quad .
\end{equation}
In the limit of large $N$ we can assume that the function $s$ does not depend on $N$, i.e., is a priori given by the scaling exponents of the system. Let us also assume, without loss of generality, that $s$ is an increasing function (we can neglect the last cell, because its size is determined by the size of previous cells).
For $N \gg 1$, we approximate the sum by the integral and obtain
\begin{equation}
	\int_0^{p(N)} s(i) \mathrm{d} i \sim 2^N \quad .
\end{equation}
Denoting $S(m) = \int_0^m s(y) \mathrm{d} y$, and substituting $x=p(N)$, we recast the previous equation as,
$S(x) = 2^{p^{-1}(x)}$,
where $p^{-1}$ denotes the inverse function of $p$. The function $s(x)$ can be therefore determined as
\begin{equation}
	s(x) = \frac{\mathrm{d} \,  2^{p^{-1}(x)}}{\mathrm{d} x} = \frac{2^{p^{-1}(x)}}{p'(p^{-1}(x))}\quad .
\end{equation}
Some examples for $s(x)$ for a corresponding $p(N)$ are
\begin{itemize}
\item $p(N) = 2^N$, i.e., full partitioning corresponding to uncorrelated random walk. In this case, we obtain that $s(x) = const.$, as expected.
\item $p(N) = N$, i.e., no partitioning to maximally correlated random walk. We obtain that $s(x) \sim 2^x$, which can be seen from the relation $\sum_{i}^{N} 2^i \sim 2^N$.
\item $p(N) = N \log N$, which corresponds to the correction in the magnetic coin model.
	In this case, $s(x) \sim 2^{W(x)}/\log(W(x))$, where $W(x)$ is the Lambert W-function.
\end{itemize}

\end{document}